\documentclass[12pt]{article}

\usepackage{amsmath,amssymb,amsthm}
\usepackage{mathtools}
\usepackage{hyperref}
\usepackage{natbib}
\usepackage{enumitem}
\usepackage{booktabs}
\usepackage{xcolor}
\usepackage{graphicx}

\usepackage{newpxtext}  
\usepackage{newpxmath}
\usepackage[left=2cm,right=2cm,top=2cm,bottom=2cm]{geometry}

\usepackage{tabularx}

\hypersetup{colorlinks=true, linkcolor=blue!50!black, citecolor=blue!50!black, urlcolor=blue!50!black}

\newtheorem{definition}{Definition}
\newtheorem{proposition}{Proposition}
\newtheorem{remark}{Remark}

\title{Interventional Score Geometry for Causal Inference:\\
\large A Dictionary for Structural Identification}
\author{
Mojtaba Eslami\thanks{
Post-doctoral Researcher, University of Calgary.
Email: \texttt{mojtaba.eslami@alumni.ucalgary.ca}}}

\date{\small{May 21, 2026} \\ \vspace{0.3cm} \small Version 3.2}

\begin{document}
\maketitle

\begin{abstract}
Let $p(x)$ denote the joint density of a vector of observed variables $X$, and let $\psi(x) = \nabla_x \log p(x)$ be its score field, the gradient of the log-density. The observational geometry built from $p(x)$ and $\psi(x)$ alone cannot identify causal direction, because it is a function of $p$ alone: any two structural models that share $p$ share this geometry exactly, in the same way that two rivers merging downstream can no longer be told apart by sampling the water below the merge. I develop the interventional analogue carefully. A hard intervention $\operatorname{do}(X_k=\xi)$, which fixes coordinate $X_k$ at the value $\xi$ by policy or design, does not merely reweight the joint law. It collapses the law onto the submanifold $\{x_k=\xi\}$, much as clamping one string of an instrument does not just change its pitch but removes that string entirely from the space of things that can vibrate. Its score is therefore not a gradient in the original $d$ coordinates. I define it instead on the $d-1$ coordinates left free to vary, where it is an ordinary, non-singular object. Causal influence $X_k \rightsquigarrow X_j$ is defined distributionally, as variation of the interventional \emph{marginal} law of $X_j$ with $\xi$, and a matching derivative of the marginal interventional score gives a locally checkable sufficient condition for it. I also show that projecting the observational score onto a set of admissible intervention directions (the coordinates that a given design or policy can, in principle, manipulate) does not recover the causal response in general: two structural models can share an observational score and an admissible set while responding quite differently to intervention. I replace that projection with an interventional response field that must instead be supplied from structural information. A causal metric is then built as an ordinary Fisher information metric on a family of interventions sharing a common target, which avoids the ill-posed cross-target comparisons a naive construction runs into. I use this vocabulary to build a corrected geometric dictionary for randomized trials, instrumental variables, and conditional-independence designs, stating precisely what each does and does not imply, and I illustrate the observational and interventional distinction with a fully worked bivariate Gaussian example in which two models share an observational score but differ sharply in their interventional score derivative. The framework is deliberately narrow. It supplies notation for relating designs, interventions, and score fields, and it does not manufacture identification power beyond what the underlying structural or design assumptions already provide. It is also compatible with Pearl's Ladder of Causation: observational score geometry belongs to the associational rung, intervention-indexed score fields and response operators belong to the interventional rung, and a geometry of unit-level counterfactuals is left for future work.
\end{abstract}

\section{Introduction}

Structural causal analysis distinguishes between the joint distribution of observables and the causal mechanism that generates it. Distinct structural models can generate the same reduced-form distribution while disagreeing about causal direction; this classical non-identification result motivates instrumental variables, exclusion restrictions, and randomized designs \citep{haavelmo1943,pearl2009,imbens2015}.

Information geometry equips statistical models with a Riemannian structure derived from the Fisher information metric \citep{amari2016}, and recent work studies causal and predictive relationships through the geometry of manifolds built from observational data \citep{surasinghe2020}. These are, by construction, functions of $p(x)$ alone, and therefore face the same identification limit as $p(x)$ itself.

This paper's organizing idea is that causal asymmetry has to come from outside $p(x)$, the joint density of the observed variables: from an explicit intervention. A distribution, however carefully you stare at it, tells you what tends to go with what; it does not tell you what would happen if you reached in and moved something. That has to be supplied, and once it is, the interesting question is how to write it down without smuggling in more than was actually given. Section~\ref{sec:interventional} builds the resulting interventional score correctly, on the coordinates left non-degenerate by a hard intervention, and works through a fully explicit example. Section~\ref{sec:response} shows why an admissible set of intervention directions cannot, by itself, convert an observational score into a causal one, and introduces the interventional response field that must replace it. Section~\ref{sec:metric} builds a causal metric as ordinary Fisher information on a family of interventions sharing a common target, so that the underlying densities live on a common space. Section~\ref{sec:designs} restates randomization, instrumental variables, and conditional-independence designs in this vocabulary, stating precisely what follows from each assumption and what does not.

\subsection{A geometric motivation}
\label{subsec:motivation}

It helps to have a picture in mind before the definitions start. The intuition I are borrowing, loosely, is the one general relativity uses for gravity: gravity is not an extra force bolted onto an otherwise fixed background, but something read off from how the geometry of spacetime governs motion. What is locally felt as gravitational pull reflects a larger structure, not a property sitting inside one object. I want to treat causal influence the same way: not as a label attached to a single observational distribution, but as something read off from how a whole family of distributions moves as an intervention parameter changes. The relevant object is therefore never $(M,g,\psi)$ sitting still; it is the indexed family $\xi\mapsto p_{-k}^{(k,\xi)}$, and how that family deforms as $\xi$ varies.

The analogy is worth exactly as much as it is limited, and it is worth being blunt about the limit. I are not claiming that causal influence literally \emph{is} curvature, and Proposition~\ref{prop:nonid} says something closer to the opposite of that claim: causality is provably not hiding anywhere inside the static observational geometry, however carefully that geometry is built. What the analogy is meant to convey is only a shift in where to look: from a single shape to a family of shapes and their deformation. The rest of the paper is the attempt to make that shift precise rather than leave it as a metaphor.

\subsection{Scope}
The admissible intervention set at $x$ (the directions in which the system can, in principle, be perturbed by a given design or policy; Section~\ref{sec:response} makes this precise), and the interventional response field built on it, must be supplied from outside the model: by design, institutional knowledge, or a structural assumption. Nothing in this framework manufactures causal content from $p(x)$ and a choice of manipulable coordinates. I do not claim new asymptotic results for meta-learners or double machine learning; Section~\ref{sec:metalearners} restates known results and is explicit that "score" is used in two unrelated technical senses across the two literatures.

\subsection{Relation to Pearl's Ladder of Causation}
Pearl's Ladder of Causation separates causal reasoning into three levels: association, intervention, and counterfactual reasoning, and the objects of this paper sort cleanly onto the first two. The observational geometry $(M,g,\psi)$ belongs to the associational rung: it is determined entirely by $p(x)$ and answers questions about what tends to occur together, not about what would happen under manipulation. The interventional score $\psi_{-k}^{(k,\xi)}$, the sensitivities $S_{k\to j}$, and the response fields $S(v)$ belong to the interventional rung, since all three are indexed by an explicit $\operatorname{do}(X_k=\xi)$ and cannot generally be recovered from $p(x)$ alone. Proposition~\ref{prop:nonid} can be read as a geometric statement of exactly the gap between these two rungs: no transformation of the observational distribution, however sophisticated, substitutes for the intervention operator.

\begin{table}[ht]
\centering
\begin{tabular}{lll}
\hline
\textbf{Pearl rung} & \textbf{Object in this paper} & \textbf{Interpretation} \\
\hline
Association
& $(M,g,\psi)$
& Geometry determined by $p(x)$ \\

Intervention
& $p_{-k}^{(k,\xi)},\; S_{k\to j},\; S(v)$
& Response under $\operatorname{do}(X_k=\xi)$ \\

Counterfactual
& Not developed
& Requires unit-level cross-world structure \\
\hline
\end{tabular}
\caption{Relation between the proposed framework and Pearl's Ladder of Causation.}
\label{tab:ladder}
\end{table}

The framework does not reach the third rung, and I do not try to stretch it there. Counterfactual reasoning requires coupling factual and hypothetical outcomes at the level of a single unit, using structure such as shared exogenous disturbances in a structural causal model, and that goes beyond a family of interventional marginals indexed only by $\xi$. A geometry of that third rung is left for future work.

\subsection{Contributions}
\begin{enumerate}
\item An interventional score defined on the non-intervened coordinates, avoiding the singular-density problem of a naive $d$-dimensional gradient (Definition~\ref{def:intscore}).
\item A two-level definition of causal influence: a distributional definition (Definition~\ref{def:causal-flow}) and a matching marginal-score diagnostic (Definition~\ref{def:causal-detect}) that is sufficient for it by construction.
\item A fully worked bivariate Gaussian example, presented immediately after the core definitions rather than late in the paper, so every later section can build on numbers the reader has already seen (Section~\ref{sec:example}).
\item A proof that an admissible intervention set cannot, by itself, convert an observational score into a causal one (Proposition~\ref{prop:projwrong}), and an interventional response field that requires structural input in its place (Definition~\ref{def:response}).
\item A causal metric built as Fisher information on a family of interventions sharing a common target, so the family lives on a single, well-defined coordinate space (Definition~\ref{def:causal-metric}).
\item A corrected geometric dictionary for randomized trials (stated for both continuous and discrete treatment), instrumental variables, and conditional-independence designs (Section~\ref{sec:designs}), and an explicit terminological disambiguation for the meta-learner and double-machine-learning literature (Section~\ref{sec:metalearners}).
\end{enumerate}

\section{Related Literature}

\textbf{Information geometry and causal manifolds.} \citet{amari2016} develops the Fisher information geometry of parametric families. \citet{surasinghe2020} construct manifolds from time-delayed observations and relate geometric distances to transfer entropy and Granger causality; this is an observational and predictive program in spirit. \citet{dominguez2023} show that structural causal models with $k$ parents induce data on $k$-dimensional submanifolds, characterizing the \emph{support} of an SCM. We differ by studying how that support and its score deform under an explicit intervention operator.

\textbf{Score-based generative models.} \citet{song2021} estimate $\nabla_x\log p(x)$ directly from data via score matching and diffusion. This supplies estimates of the observational score $\psi$, not of the interventional response field of Section~\ref{sec:response}; Section~\ref{sec:scoregen} is explicit about that gap.

\textbf{Instrumental variables, meta-learners, and double machine learning.} \citet{angrist1996} give the canonical LATE identification result under monotonicity; \citet{heckman2005} develop the continuous-instrument marginal treatment effect framework, invoked in Remark~\ref{rem:iv}. \citet{kunzel2019} introduce the X-learner; \citet{chernozhukov2018} develop double machine learning via Neyman-orthogonal moments; \citet{nie2021} develop the R-learner. Section~\ref{sec:metalearners} is explicit that the "score" in Neyman-orthogonal estimation is not the density score used elsewhere in this paper.

\section{Observational Geometry and Its Limits}
\label{sec:obs-geometry}

Let $X=(X_1,\dots,X_d)$ have density $p$ on $\mathcal X\subseteq\mathbb R^d$, smooth enough for the derivatives below to exist. let $M\subseteq\mathbb R^d$ denote the smooth support of $p$, equipped with a background Riemannian metric $g$ chosen independently of $p$; typically this is just the ambient Euclidean metric in the given coordinates. We are deliberately modest about what $M$ is: it is the support of $p$ carrying a metric we supply, not a Riemannian structure that $p$ itself induces. Define the \textbf{score field}
\[
\psi(x)=\nabla_x\log p(x), \qquad x\in M,\ p(x)>0.
\]
Intuitively, $\psi(x)$ points in the direction a particle sitting at $x$ would need to move to climb the log-density landscape fastest: where $p$ is large and flat, $\psi$ is small, and where $p$ rises or falls steeply, $\psi$ is large. The triple $(M,g,\psi)$, the \textbf{observational geometry}, is a reparameterization of $p$: support, plus the local direction of increasing log-density, and nothing more than that.

\begin{proposition}[Non-identification from observational geometry]
\label{prop:nonid}
If two structural models generate the same density $p(x)$ on $M$, they induce the same observational geometry $(M,g,\psi)$; consequently any functional of $(M,g,\psi)$ that depends on the models only through $p$ cannot distinguish between them.
\end{proposition}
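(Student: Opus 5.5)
The argument is a direct unwinding of definitions, so the plan is to show that each component of the triple $(M,g,\psi)$ is a function of $p$ together with data chosen independently of the structural model, and then compose. Fix two structural models $\mathcal S_1,\mathcal S_2$ whose observational densities $p_1,p_2$ coincide, with common density $p$. I would treat the three components in turn.

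First, the support: $M$ is defined as the smooth support of $p$, i.e.\ (the closure of) $\{x\in\mathcal X: p(x)>0\}$. Since $p_1=p_2$ pointwise, the two supports are the same set, so $M_1=M_2=M$. Second, the metric: $g$ is by construction a background metric chosen independently of $p$ and of the structural model (typically the Euclidean metric in the given coordinates), so it is the same for both models once the coordinate system is fixed. I would state explicitly that this is where the modelling convention of Section~\ref{sec:obs-geometry} is used: if $g$ were allowed to depend on structural information, the proposition would fail, so the hypothesis that $g$ is supplied externally is essential. Third, the score: on the open set where $p>0$, $\log p_1=\log p_2$ as functions. Hence their gradients agree, $\psi_1(x)=\nabla_x\log p_1(x)=\nabla_x\log p_2(x)=\psi_2(x)$ for every $x$ in that set. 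The only care needed is to note that equality of densities must hold as functions on an open set, not merely almost everywhere, for the derivative to be determined. I would handle this by using the standing smoothness assumption: two continuous densities that agree a.e.\ agree everywhere, so the smooth representatives coincide.

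Having $(M_1,g_1,\psi_1)=(M_2,g_2,\psi_2)$, the consequence follows by composition. Let $F$ be any functional that depends on the model only through $p$, that is, $F(\mathcal S)=\Phi(p_{\mathcal S})$ for some map $\Phi$; this includes any functional of $(M,g,\psi)$, since that triple is a function of $p$. Then $F(\mathcal S_1)=\Phi(p)=F(\mathcal S_2)$, so $F$ takes equal values on the two models and cannot separate them. In particular, any candidate ``direction'' statistic read off the observational geometry assigns the same answer to, say, $X_1\to X_2$ and $X_2\to X_1$ models that share a Gaussian joint law. I would cite the classical observational equivalence referenced in the introduction for the existence of such pairs, but I would not need it for the proof itself.

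I expect no real obstacle; the only delicate point is the regularity step, namely making precise that the score is determined by $p$ as an element of a function space rather than a pointwise-defined object. I would resolve this by working with the continuous (indeed smooth) version of $p$ fixed in the setup.
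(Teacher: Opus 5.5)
Your proposal is correct and follows the same route as the paper. Both arguments observe that $M$ and $\psi=\nabla_x\log p$ are determined by $p$ alone, with $g$ fixed externally, and conclude by composition; the worked example of Section~\ref{sec:example} only shows that such pairs exist, and your explicit treatment of $g$ and of the almost-everywhere versus pointwise issue are refinements the paper leaves implicit, not a different argument.
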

\begin{proof}
$M$ and $\psi(x)=\nabla_x\log p(x)$ depend on the models only through $p$, so if $p$ agrees under both models, $M$ and $\psi$ agree pointwise, and any functional built from them agrees. Section~\ref{sec:example} exhibits two structural models with different causal content and identical $p$.
\end{proof}

\section{Interventional Geometry}
\label{sec:interventional}

Most of what follows turns on getting one object right, so it is worth being unhurried about it here rather than fixing it later by patching.

\subsection{Why the interventional score must live on fewer coordinates}
For coordinate $k$ and level $\xi$, the intervention $\operatorname{do}(X_k=\xi)$ replaces the structural equation for $X_k$ with the constant $\xi$: every unit is assigned exactly the value $\xi$ on that coordinate, whatever it would otherwise have been. It helps to picture $p(x)$ as a cloud of probability mass spread over $\mathbb R^d$. Pinning one coordinate to a single value does not merely reshape that cloud; it flattens it entirely onto the $(d-1)$-dimensional slice $\{x \in \mathbb R^d : x_k = \xi\}$, much as pressing a lump of dough flat against a table leaves it spread across the tabletop rather than filling any volume above it. A distribution living on a slice like that has no density with respect to ordinary $d$-dimensional volume, since there is no thickness left in the $x_k$ direction to divide by. An expression such as $\nabla_x \log p(x \mid \operatorname{do}(X_k=\xi))$, taken naively over all $d$ coordinates, is therefore not well defined. I work instead with the density of the coordinates that remain random, which is exactly where the $d-1$ dimensions that did not get flattened away still live.

\begin{definition}[Interventional score]
\label{def:intscore}
Write $X_{-k}=(X_1,\dots,X_{k-1},X_{k+1},\dots,X_d)$, and let
\[
p^{(k,\xi)}_{-k}(x_{-k}) \;=\; p\big(X_{-k}=x_{-k} \mid \operatorname{do}(X_k=\xi)\big)
\]
be its density on $\mathbb R^{d-1}$, ordinary and non-singular wherever the intervention is well defined. The \textbf{interventional score field} is
\[
\psi^{(k,\xi)}_{-k}(x_{-k}) \;=\; \nabla_{x_{-k}}\log p^{(k,\xi)}_{-k}(x_{-k}).
\]
\end{definition}
When $d=2$, $X_{-k}$ is a single coordinate and $p^{(k,\xi)}_{-k}$ is already an ordinary one-dimensional density; Section~\ref{sec:example} is exactly this case.

Figure~\ref{fig:collapse} draws the picture for that same two-coordinate case. The left panel shows the ordinary two-dimensional density $p(x,y)$, with the vertical line marking the slice $\{X=\xi\}$ that a hard intervention on $X$ leaves behind. The right panel shows what actually survives the intervention: an ordinary one-dimensional density in $y$ alone, living along that slice. The interventional score of Definition~\ref{def:intscore} is the score of that one-dimensional density, not of anything defined on the original two-dimensional picture.

\begin{figure}[t]
\centering
\includegraphics[width=0.85\textwidth]{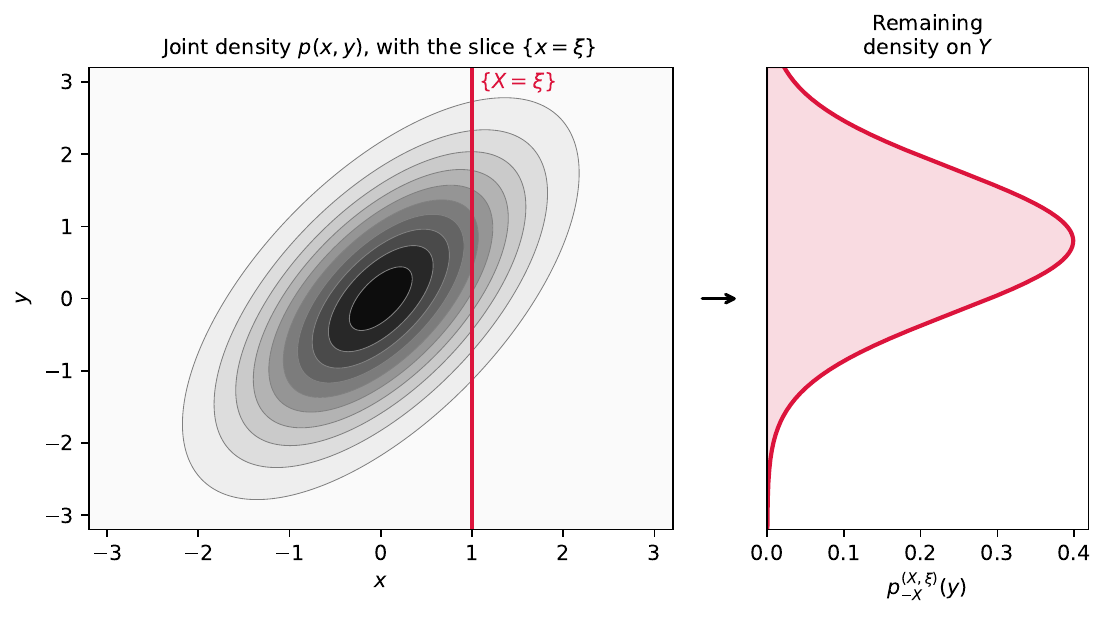}
\caption{A hard intervention $\operatorname{do}(X=\xi)$ collapses the two-dimensional density $p(x,y)$ onto the vertical slice $\{x=\xi\}$ (left). What remains is an ordinary one-dimensional density along that slice (right); this is the object Definition~\ref{def:intscore} takes the score of.}
\label{fig:collapse}
\end{figure}

\subsection{Causal influence: distributional definition and marginal diagnostic}

\begin{definition}[Causal influence]
\label{def:causal-flow}
$X_k$ has a causal influence on $X_j$ ($j\neq k$), written $X_k \rightsquigarrow X_j$, if the marginal law of $X_j$ under $\operatorname{do}(X_k=\xi)$ varies with $\xi$: there exist $\xi_1\neq\xi_2$ such that
\[
p\big(X_j \mid \operatorname{do}(X_k=\xi_1)\big) \;\neq\; p\big(X_j\mid \operatorname{do}(X_k=\xi_2)\big).
\]
\end{definition}

The diagnostic used to detect this must be built from the same marginal object, not from the joint score of all non-intervened coordinates. A nonzero derivative of $\partial_{x_j}\log p^{(k,\xi)}_{-k}(x_{-k})$, the joint score in the $x_j$ direction, can arise purely from a change in the \emph{dependence} between $X_j$ and some other non-intervened coordinate $X_\ell$, with the marginal law of $X_j$ itself left unchanged. For instance, an intervention could tighten or loosen how closely $X_j$ tracks $X_\ell$ without shifting $X_j$'s own typical behavior at all, and the joint score would register exactly that tightening, wrongly making it look as though $X_k$ had reached all the way to $X_j$. A change of that kind would not witness $X_k\rightsquigarrow X_j$ in the sense of Definition~\ref{def:causal-flow}. I therefore define the diagnostic directly on the marginal.

\begin{definition}[Score-detectable causal influence]
\label{def:causal-detect}
Let $p_j^{(k,\xi)}(x_j) = p(X_j=x_j\mid\operatorname{do}(X_k=\xi))$ be the marginal interventional density of $X_j$ alone (obtained from $p^{(k,\xi)}_{-k}$ by further marginalizing over the remaining coordinates), and define
\[
S_{k\to j}(\xi,x_j) \;=\; \frac{\partial}{\partial\xi}\,\partial_{x_j}\log p_j^{(k,\xi)}(x_j).
\]
If $S_{k\to j}(\xi,x_j)\neq 0$ on an open set of $(\xi,x_j)$, then $p(X_j\mid\operatorname{do}(X_k=\xi))$ varies with $\xi$ on that set, so $X_k\rightsquigarrow X_j$ in the sense of Definition~\ref{def:causal-flow}. The converse need not hold in general: a distributional change need not always show up as a score change, for instance if an intervention shifts or reshapes the support of $X_j$ without changing $\partial_{x_j}\log p_j^{(k,\xi)}$ at points common to both supports. There is a second, more basic reason the score is a coarser object than the density: two densities with the same score agree only up to a multiplicative constant on each connected component of a common support, and it is normalization (each $p_j^{(k,\xi)}$ must integrate to one) that pins that constant down. Proposition~\ref{prop:necessity} below shows that once the support is fixed, connected, and the density stays strictly positive, this is the \emph{only} slack in the diagnostic, and the converse we just declined to claim in general in fact holds.
\end{definition}
Because $S_{k\to j}$ is now built from the same marginal $p_j^{(k,\xi)}$ that Definition~\ref{def:causal-flow} refers to, the implication above is immediate and does not depend on the behavior of coordinates other than $X_j$. All later uses of "interventional score sensitivity" refer to this marginal $S_{k\to j}$.

\begin{proposition}[Score equivalence under a fixed support]
\label{prop:necessity}
Suppose $p_j^{(k,\xi)}$ is strictly positive and smooth in $(\xi,x_j)$ on a fixed connected support for every $\xi$ in an interval $I$. If
\[
S_{k\to j}(\xi,x_j) = 0 \qquad \text{for all } (\xi,x_j) \in I\times\text{supp}(X_j),
\]
then $p_j^{(k,\xi)}(x_j)$ does not depend on $\xi$, and hence $X_k$ does not causally influence $X_j$ in the sense of Definition~\ref{def:causal-flow} over $I$.
\end{proposition}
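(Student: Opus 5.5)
The plan is to integrate the vanishing mixed derivative twice, once in $\xi$ and once in $x_j$, and then use normalization to remove the one constant of integration that survives. Write $\ell(\xi,x_j)=\log p_j^{(k,\xi)}(x_j)$. Because $p_j^{(k,\xi)}$ is strictly positive and smooth in $(\xi,x_j)$, $\ell$ is smooth on $I\times J$, where $J$ is the fixed connected support. Since $J\subseteq\mathbb R$ is connected, it is an interval. The hypothesis $S_{k\to j}\equiv 0$ says exactly that $\partial_\xi\partial_{x_j}\ell=0$ on the product $I\times J$. This product of intervals is convex, so elementary calculus applies directly.

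\textbf{Step 1.} Fix $x_j\in J$. The map $\xi\mapsto\partial_{x_j}\ell(\xi,x_j)$ has zero derivative on the interval $I$, so it is constant. Hence $\partial_{x_j}\ell(\xi,x_j)=h(x_j)$ for some function $h$ that does not depend on $\xi$.

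\textbf{Step 2.} Fix a base point $x_0\in J$. For each $\xi$, integrate in $x_j$ along the segment from $x_0$ to $x_j$, which lies in $J$ because $J$ is connected. This gives
\[
\ell(\xi,x_j)=H(x_j)+c(\xi), \qquad H(x_j)=\int_{x_0}^{x_j}h(t)\,dt,\quad c(\xi)=\ell(\xi,x_0).
\]
Therefore $p_j^{(k,\xi)}(x_j)=e^{c(\xi)}e^{H(x_j)}$. This is the "same score implies proportional densities on a connected component" fact mentioned after Definition~\ref{def:causal-detect}. Connectedness is what makes a single constant suffice; on a disconnected support each component would carry its own $c(\xi)$.

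\textbf{Step 3.} Integrate over $J$ and use $\int_J p_j^{(k,\xi)}=1$ for every $\xi$. This yields $e^{c(\xi)}\int_J e^{H}=1$. The integral $\int_J e^H$ does not depend on $\xi$, and it is finite and positive because it equals $e^{-c(\xi_0)}$ for any fixed $\xi_0\in I$. So $c(\xi)$ is constant on $I$, and $p_j^{(k,\xi)}$ does not depend on $\xi\in I$. In particular, no pair $\xi_1\neq\xi_2$ in $I$ witnesses the inequality in Definition~\ref{def:causal-flow}, so $X_k\not\rightsquigarrow X_j$ over $I$.

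The calculus in Steps 1 and 2 is routine. I expect the main care to be needed in Step 3 and in the role of the fixed support. The normalization step uses that the domain of integration $J$ is the same for all $\xi$; if it were not, the constant could not be pinned down, which is the support-shift counterexample noted in Definition~\ref{def:causal-detect}. I would state explicitly that "fixed support" means $J$ is independent of $\xi$. No interchange of derivative and integral is needed, because the argument only integrates the already-established identity $p_j^{(k,\xi)}=e^{c(\xi)}e^{H}$.
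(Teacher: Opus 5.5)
Your proof is correct and is essentially the paper's argument: you integrate the vanishing mixed partial to get $\log p_j^{(k,\xi)}(x_j)=H(x_j)+c(\xi)$, and then normalization over the fixed support forces $c$ to be constant. The only difference is that you use constancy in $\xi$ first and then integrate in $x_j$, which follows the order of differentiation in $S_{k\to j}$; the paper takes the opposite order, which tacitly swaps the mixed partials (harmless under the smoothness assumption).
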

\begin{proof}
Fix $\xi$. The hypothesis says $\partial_{x_j}\big[\partial_\xi \log p_j^{(k,\xi)}(x_j)\big]=0$ for every $x_j$ in the support, which is connected, so $\partial_\xi \log p_j^{(k,\xi)}(x_j) = c(\xi)$ for some function $c$ not depending on $x_j$. Integrating in $\xi$ from a reference level $\xi_0\in I$,
\[
\log p_j^{(k,\xi)}(x_j) - \log p_j^{(k,\xi_0)}(x_j) = \int_{\xi_0}^{\xi} c(u)\,du =: C(\xi),
\]
again not depending on $x_j$, so $p_j^{(k,\xi)}(x_j) = e^{C(\xi)}\, p_j^{(k,\xi_0)}(x_j)$ for every $x_j$. Both sides integrate to $1$ over the fixed support, forcing $e^{C(\xi)}=1$; hence $p_j^{(k,\xi)}=p_j^{(k,\xi_0)}$ for every $\xi\in I$, i.e.\ the marginal does not vary with $\xi$.
\end{proof}
Combined with the sufficiency direction already noted in Definition~\ref{def:causal-detect}, Proposition~\ref{prop:necessity} says that on a fixed, connected, strictly positive support, $S_{k\to j}\equiv 0$ throughout $I\times\text{supp}(X_j)$ if and only if $X_k$ has no causal influence on $X_j$ over $I$: the score diagnostic is exactly as good as the distributional definition once the support itself is not moving. The gap between the two definitions, in other words, is entirely a support phenomenon. In practice this means the diagnostic can be trusted whenever an intervention is not plausibly pushing $X_j$ into values it could never have taken before; the gap is only worth worrying about when a policy might introduce genuinely new values of $X_j$ that fall outside anything observed under the status quo.

\section{A Worked Example}
\label{sec:example}

The definitions above are abstract enough that it helps to see them worked out with actual numbers before going further. The pair of models below does exactly what Proposition~\ref{prop:nonid} says is possible: two structural stories that agree on absolutely everything observational, yet disagree on the one thing this paper cares about, namely how $Y$ responds when $X$ is manipulated.

Let $X\sim N(0,1)$ and, under \textbf{Model A},
\[
Y=\beta X+\varepsilon_Y, \qquad \varepsilon_Y\sim N(0,\sigma^2),\ \varepsilon_Y\perp X,
\]
with $\beta=0.8$, $\sigma^2=1$. The joint law is bivariate normal with covariance
\[
\Sigma=\begin{pmatrix}1 & 0.8\\ 0.8 & 1.64\end{pmatrix}, \qquad \operatorname{Var}(Y)=\beta^2+\sigma^2=1.64.
\]
Because $(X,Y)$ is jointly Gaussian, the same $\Sigma$ is generated by the reverse linear model. Under \textbf{Model B},
\[
Y=\tilde\varepsilon_Y,\quad \tilde\varepsilon_Y\sim N(0,1.64), \qquad X=\gamma Y+\tilde\varepsilon_X,\quad \tilde\varepsilon_X\sim N(0,\tau^2),\ \tilde\varepsilon_X\perp Y,
\]
with $\gamma=\operatorname{Cov}(X,Y)/\operatorname{Var}(Y)=0.8/1.64\approx 0.488$ and $\tau^2=\operatorname{Var}(X)-\gamma^2\operatorname{Var}(Y)=1-0.8^2/1.64\approx 0.610$. A direct computation confirms Model B also induces $\Sigma$: a bivariate Gaussian admits a valid linear-Gaussian SCM representation in either causal direction. Models A and B are observationally equivalent, hence by Proposition~\ref{prop:nonid} induce the same $(M,g,\psi)$.

Here $d=2$, so intervening on $X$ leaves exactly one non-intervened coordinate, $Y$; the joint interventional density $p^{(X,\xi)}_{-X}(y)$ of Definition~\ref{def:intscore} and the marginal interventional density $p_Y^{(X,\xi)}(y)$ of Definition~\ref{def:causal-detect} coincide in this case, since there is nothing else to marginalize over.

Under Model A, $Y=\beta X+\varepsilon_Y$ is untouched by the intervention on $X$'s own equation, so
\[
p_Y^{(X,\xi)}(y) = N(\beta\xi,\ \sigma^2), \qquad \partial_y\log p_Y^{(X,\xi)}(y) = -\frac{y-\beta\xi}{\sigma^2}, \qquad S_{X\to Y}(\xi,y) = \frac{\beta}{\sigma^2} = 0.8 \neq 0.
\]
Under Model B, $Y=\tilde\varepsilon_Y$ does not involve $X$, so intervening on $X$ leaves $Y$'s marginal untouched:
\[
p_Y^{(X,\xi)}(y) = N(0,\ 1.64) \ \text{ for every }\xi, \qquad S_{X\to Y}(\xi,y) = 0.
\]

Figure~\ref{fig:scorefield} makes both halves of this concrete. Panel (a) plots the shared observational score field $\psi(x,y)$, the single vector field both models agree on since they generate the identical joint density. Panel (b) plots $E[Y\mid \operatorname{do}(X=\xi)]$ against the intervention level $\xi$ under each model: a line of slope $0.8$ under Model A, and a flat line at zero under Model B, exactly the two numbers computed above.

\begin{figure}[t]
\centering
\includegraphics[width=0.92\textwidth]{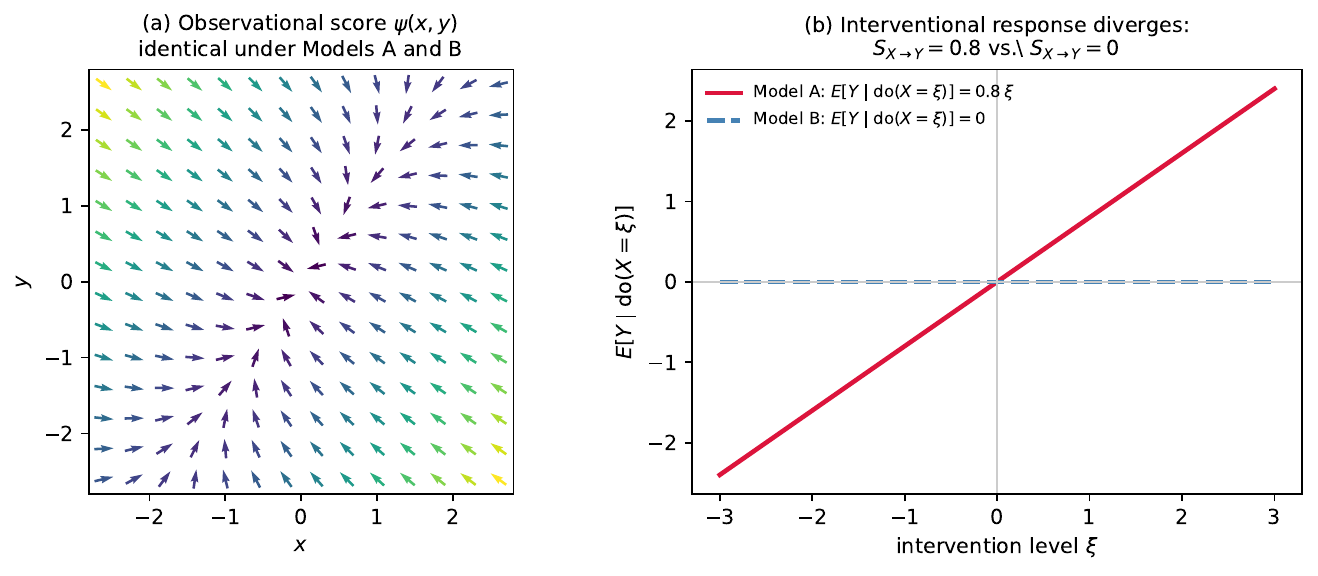}
\caption{(a) The observational score field $\psi(x,y)$ is identical under Models A and B; it is a function of the shared joint density alone, and by Proposition~\ref{prop:nonid} carries no information about which model generated it. (b) The interventional response $E[Y\mid \operatorname{do}(X=\xi)]$ diverges sharply between the two models, with slope $S_{X\to Y}=0.8$ under Model A and $S_{X\to Y}=0$ under Model B. The two panels together are the entire argument of the paper in miniature: identical statics, different dynamics.}
\label{fig:scorefield}
\end{figure}

Models A and B share the identical observational score but disagree exactly on $S_{X\to Y}$: $0.8$ under A, $0$ under B. Model A is score-detectably consistent with $X\rightsquigarrow Y$; Model B is not. This same pair reappears in Section~\ref{sec:response}, where it is used to show that a common admissible set for both models still yields a common projected score, even though the true responses differ.

\begin{remark}
Nothing in $(M,g,\psi)$ told us which model was "correct." We had to be given the structural equations to compute $S_{X\to Y}$ at all: the geometry organizes the computation, but it does not substitute for the identifying assumption.
\end{remark}

\section{Admissible Directions and the Interventional Response Field}
\label{sec:response}

Not all coordinates are equally manipulable. I record the feasible directions at $x\in M$ as a set
\[
A(x) \subseteq T_xM,
\]
representing directions in which the system can, in principle, be perturbed by design or policy. I deliberately call this an \textbf{admissible intervention set} rather than a cone: treating $A(x)$ as closed and convex silently assumes that mixtures and positive rescalings of feasible interventions are themselves feasible. That is a reasonable idealization for continuously divisible interventions (e.g., a tax rate or a dosage that can be set to any level in a range), but it need not hold for discrete or binary interventions (e.g., "treated" versus "untreated," or a fixed menu of policy options), where $A(x)$ may be a finite set with no natural convex or scalar structure. I use the convex-cone case only where it is the natural description of continuously divisible interventions; otherwise $A(x)$ should be read as an unstructured admissible set. In other words, $A(x)$ is best thought of as a menu of moves the analyst is willing to entertain at $x$, not as a piece of geometric structure that comes for free once $M$ and $g$ are fixed.

\begin{proposition}[The projected score does not identify causal response]
\label{prop:projwrong}
There exist two structural models that (i) generate the same observational density $p$, hence the same $\psi$; (ii) admit the same admissible set $A(x)$; and (iii) have genuinely different causal responses to intervention. Consequently any projection of $\psi$ onto $A(x)$ is identical under both models even though their causal content differs, and cannot be the causal field.
\end{proposition}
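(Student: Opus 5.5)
The plan is to prove the statement by reusing the explicit pair of Models A and B from Section~\ref{sec:example}. Existence claims are settled most cleanly by a concrete construction, and this pair already satisfies condition (i). Both models induce the bivariate normal law with covariance $\Sigma$, so they share $p$ and, by Proposition~\ref{prop:nonid}, share $M=\mathbb R^2$, $g$ (the Euclidean metric, which is supplied independently of the model) and $\psi(x,y)=-\Sigma^{-1}(x,y)^\top$. I will state this by citing the earlier computation of $\gamma$ and $\tau^2$ rather than redoing it.

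For (ii), I will fix the admissible set the same way for both models. Since $A(x)$ is, by the discussion above, supplied from outside the model, I am free to choose it. The natural choice is $A(x,y)=\{t e_X : t\in\mathbb R\}$, the manipulable $X$-direction, taken identical at every point under both models. Because $A$ is a function of the point alone and not of the structural equations, (ii) holds trivially. To keep the claim general, I will note that the argument works for any common specification of $A$, including a convex cone, a finite menu, or the whole tangent space.

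For (iii), I will invoke the computation in Section~\ref{sec:example}. There $S_{X\to Y}=0.8$ under Model A and $S_{X\to Y}=0$ under Model B. Equivalently, $p_Y^{(X,\xi)}=N(0.8\xi,1)$ varies with $\xi$, while $N(0,1.64)$ does not. So $X\rightsquigarrow Y$ holds under A and fails under B in the sense of Definition~\ref{def:causal-flow}, with the sufficiency direction of Definition~\ref{def:causal-detect} certifying the A case. The consequence then follows formally. Let $\Pi$ be any projection, or more generally any map, that takes as inputs only $\psi$, $g$ and $A$, for example the metric projection $\Pi_{A(x)}\psi(x)$. Its output is a function of inputs that coincide under both models, so it is identical under A and B. A field that is equal under A and B cannot equal a causal field that distinguishes them. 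In particular, it cannot reproduce both $S_{X\to Y}=0.8$ and $S_{X\to Y}=0$.

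The only delicate step is making precise what ``any projection of $\psi$ onto $A(x)$'' means. When $A(x)$ is not convex, a nearest-point projection may fail to exist or be unique. I would handle this by phrasing the conclusion for an arbitrary, possibly set-valued, selection rule determined by $(\psi(x),g_x,A(x))$. Whatever rule is used, it receives identical inputs under both models and must therefore return identical outputs, so well-definedness of the projection itself never matters.
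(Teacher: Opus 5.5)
Your proposal is correct and follows essentially the same argument as the paper: reuse Models A and B, impose the common admissible set $\operatorname{span}\{\partial_x\}$ on both, observe that any projection is a function of inputs that coincide under the two models, and contrast this with $S_{X\to Y}=0.8$ versus $0$. Your additions make the paper's ``function of $\psi$ and $A(x)$ alone'' step slightly more careful. These are listing $g$ explicitly among the inputs, allowing set-valued selection rules when $A(x)$ is non-convex, and giving the distributional witness $N(0.8\xi,1)$ versus $N(0,1.64)$.
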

\begin{proof}
Take Models A and B of Section~\ref{sec:example}. They share $p(x,y)$ (hence $\psi$) by construction. Impose on both the same admissible set, namely that only the $X$-coordinate is manipulable: $A(x,y) = \operatorname{span}\{\partial_x\}\cap T_{(x,y)}M$. Both hypotheses then hold by construction, so $\Pi_{A(x)}\psi(x,y)$ is the identical vector under both models: it is, after all, a function of $\psi$ and $A(x)$ alone, and both inputs are identical by (i) and (ii). Yet Section~\ref{sec:example} shows the true response to $\operatorname{do}(X=\xi)$ differs across the models: $S_{X\to Y}=0.8$ under Model A and $=0$ under Model B. Since $\Pi_{A(x)}\psi$ takes the same value regardless of which model is true, while the causal response does not, no functional depending only on $\psi$ and $A(x)$, including the projected score, can represent the causal response in general.
\end{proof}

The admissible set tells us which directions are feasible to perturb; it does not tell us how the distribution responds. Knowing that a lever can be pulled says nothing about what happens when it is pulled, and that is the whole content of this distinction. The response is a separate primitive.

\begin{definition}[Interventional response field]
\label{def:response}
Suppose that for $v\in A(x)$ there is a structurally specified one-parameter family of interventions, with density $p^{\,\operatorname{do},\varepsilon}_v(z)$ on the coordinates $z$ left non-degenerate by the intervention, satisfying $p^{\,\operatorname{do},0}_v = p$ (or the relevant marginal or reduced density, per Section~\ref{sec:interventional}) and differentiable in $\varepsilon$ at $\varepsilon=0$. The \textbf{interventional response field along $v$}, evaluated at $x$, is the function of $z$
\[
S(v)(z) \;=\; \frac{\partial}{\partial\varepsilon}\Big[\nabla_z \log p^{\,\operatorname{do},\varepsilon}_v(z)\Big]_{\varepsilon=0}, \qquad\text{and we write } S_x(v) := S(v)(z)\big|_{z=x}
\]
for its value at the point of interest, when $x$ remains among the non-degenerate coordinates.
\end{definition}
For intuition, $v$ might stand for "increase the dosage." The family $p^{\,\operatorname{do},\varepsilon}_v$ is then the sequence of outcome distributions produced by dosages of increasing intensity $\varepsilon$, and $S(v)$ measures how fast that sequence moves the outcome's score right at the status quo, $\varepsilon=0$. Causal information lives in the map $v\mapsto S(v)(\cdot)$, which requires the family $\{p^{\,\operatorname{do},\varepsilon}_v\}$ as an input: structural or experimental knowledge of how the system actually responds to manipulation, not merely which directions are available to move in. It is not recoverable from $\psi$ and $A(x)$ alone (Proposition~\ref{prop:projwrong}).

\subsection{Location interventions as translations of the score field}
\label{subsec:location-shifts}

The response field of Definition~\ref{def:response} is deliberately abstract, so it is worth seeing it worked out in a case simple enough to hold in one's head. Many interventions of practical interest, such as a subsidy that adds a fixed amount to everyone's income or a dosage shift applied uniformly across patients, act by sliding a distribution along the number line without reshaping it. Suppose an intervention indexed by \(\xi\) changes the marginal distribution of a response variable \(Y\) through exactly such a location shift:
\begin{equation}
p_\xi(y)
=
p_0\bigl(y-m(\xi)\bigr),
\label{eq:location-family}
\end{equation}
where \(m(\xi)\) is differentiable and gives the size of the shift at intervention level $\xi$. Let
\begin{equation}
\ell_0(u)
=
\log p_0(u).
\end{equation}
The score of the interventional distribution is then
\begin{align}
\psi_\xi(y)
&=
\partial_y \log p_\xi(y) \\
&=
\partial_y \ell_0\bigl(y-m(\xi)\bigr) \\
&=
\ell_0'\bigl(y-m(\xi)\bigr).
\label{eq:location-score}
\end{align}
Differentiating with respect to the intervention level gives
\begin{align}
S(\xi,y)
&=
\partial_\xi \psi_\xi(y) \\
&=
\partial_\xi
\ell_0'\bigl(y-m(\xi)\bigr) \\
&=
-m'(\xi)\,
\ell_0''\bigl(y-m(\xi)\bigr).
\label{eq:location-response}
\end{align}

Equation~\eqref{eq:location-response} separates the interventional response into two components with a clean reading. The term \(m'(\xi)\) measures the rate at which the intervention translates the distribution (how fast the subsidy grows, say), while \(\ell_0''\) describes the local shape of the log-density, i.e.\ how sharply peaked the untouched distribution $p_0$ is at the point being shifted through. A location intervention moves the probability landscape without changing its shape, so the score's response to it factors cleanly into "how fast are we moving it" times "how steep is the landscape we are moving it through." Nothing about dependence structure or higher moments enters at all, which is exactly why this special case is simple enough to compute by hand.

For a Gaussian location family,
\begin{equation}
p_\xi(y)
=
\mathcal{N}\bigl(m(\xi),\sigma^2\bigr),
\end{equation}
we have
\begin{equation}
\ell_0''(u)
=
-\frac{1}{\sigma^2}.
\end{equation}
Therefore,
\begin{equation}
S(\xi,y)
=
\frac{m'(\xi)}{\sigma^2}.
\label{eq:gaussian-location-response}
\end{equation}
In the bivariate Gaussian example of Section~\ref{sec:example}, \(m(\xi)=\beta \xi\), so
\begin{equation}
S_{X\rightarrow Y}(\xi,y)
=
\frac{\beta}{\sigma^2}.
\end{equation}
This is precisely the interventional score sensitivity computed by hand in Section~\ref{sec:example}, which is reassuring: it means that earlier calculation was not a coincidence specific to that example but an instance of this general location-family formula.

\section{A Causal Metric}
\label{sec:metric}

A metric earns its keep only if it answers a simple question in a single number: how much does a small twist of the intervention knob actually move the distribution? Fisher information is the natural way to measure that, but it needs a well-defined family of distributions to measure it \emph{within}. A causal metric built directly from cross-target derivatives runs into a further issue beyond the ones already corrected: an intervention on $X_k$ leaves a density on $X_{-k}$, while an intervention on a different coordinate $X_{k'}$ leaves a density on $X_{-k'}$, and these are not the same coordinate space. Fisher information requires a single dominated family on a common measurable space, so comparing sensitivities across different intervention targets is not simply a matter of writing a joint expectation. I therefore build the metric one target at a time.

\begin{definition}[Causal metric for a fixed intervention target]
\label{def:causal-metric}
Fix a target coordinate $k$. Suppose interventions on $X_k$ are indexed by $\eta=(\eta_1,\dots,\eta_m)\in H$ (the leading case is $m=1$ with $\eta=\xi$, but $\eta$ may parameterize a richer family of intervention protocols on $X_k$), generating densities $\{p^{(k)}_\eta\}_{\eta\in H}$ on the common space of non-intervened coordinates $X_{-k}\in\mathbb R^{d-1}$, dominated by a common measure and smooth in $\eta$ with finite Fisher information. Define
\[
G^{(k)}_{ab}(\eta) \;=\; E_{p^{(k)}_\eta}\!\left[\partial_{\eta_a}\log p^{(k)}_\eta(X_{-k})\,\partial_{\eta_b}\log p^{(k)}_\eta(X_{-k})\right].
\]
\end{definition}
A large $G^{(k)}(\eta)$ means that even a small change in the intervention level $\eta$ produces a density that is easy to tell apart, statistically, from the one just before it: the intervention is, in a precise sense, informative there. A small $G^{(k)}(\eta)$ means the opposite: nearby intervention levels are hard to distinguish from data, so an experiment run at that $\eta$ will need a much larger sample before anything shows up.

\paragraph{Fisher geometry of a location-intervention family.}

The location family of Equation~\eqref{eq:location-family} is transparent enough to compute the fixed-target causal metric by hand, which is worth doing once before trusting Definition~\ref{def:causal-metric} in messier settings. Differentiating the log-density with respect to the intervention parameter gives
\begin{align}
\partial_\xi \log p_\xi(y)
&=
\partial_\xi
\log p_0\bigl(y-m(\xi)\bigr) \\
&=
-m'(\xi)\,
\partial_y
\log p_0\bigl(y-m(\xi)\bigr).
\end{align}
The corresponding Fisher information metric is
\begin{align}
G(\xi)
&=
\mathbb{E}_{p_\xi}
\left[
\left(
\partial_\xi \log p_\xi(Y)
\right)^2
\right] \\
&=
\bigl(m'(\xi)\bigr)^2
\mathbb{E}_{p_\xi}
\left[
\left(
\partial_y
\log p_0\bigl(Y-m(\xi)\bigr)
\right)^2
\right].
\end{align}
After the change of variable
\begin{equation}
U=Y-m(\xi),
\end{equation}
we obtain
\begin{equation}
G(\xi)
=
\bigl(m'(\xi)\bigr)^2
I_{\mathrm{loc}}(p_0),
\label{eq:location-fisher-metric}
\end{equation}
where
\begin{equation}
I_{\mathrm{loc}}(p_0)
=
\mathbb{E}_{p_0}
\left[
\left(
\partial_u \log p_0(U)
\right)^2
\right]
\end{equation}
is the Fisher information for the location parameter.

If the intervention is a direct translation, \(m(\xi)=\xi\), then
\begin{equation}
G(\xi)
=
I_{\mathrm{loc}}(p_0),
\end{equation}
which is constant in \(\xi\). For a Gaussian distribution with variance \(\sigma^2\),
\begin{equation}
I_{\mathrm{loc}}(p_0)
=
\frac{1}{\sigma^2},
\end{equation}
so the line element on the intervention-parameter space is
\begin{equation}
ds^2
=
\frac{1}{\sigma^2}\,d\xi^2.
\end{equation}

More generally, for any regular one-dimensional intervention family, define the arc-length coordinate
\begin{equation}
s(\xi)
=
\int_{\xi_0}^{\xi}
\sqrt{G(u)}\,du.
\end{equation}
In the coordinate \(s\), the line element becomes
\begin{equation}
ds^2.
\end{equation}
Thus every regular one-dimensional intervention family is locally Euclidean after arc-length reparameterization. This flatness is worth not over-reading: any smooth, strictly positive metric on a one-dimensional space becomes the standard Euclidean line element after an arc-length change of coordinates. That is a generic fact of one-dimensional Riemannian geometry, true of every regular curve whatsoever, and it is not a special discovery about interventions or about causal structure. What the calculation actually contributes is $G(\xi)$ itself, computed \emph{before} that reparameterization flattens it away: the decomposition into the intervention's own "speed" $m'(\xi)$ and the ambient sharpness of the log-density, $I_{\mathrm{loc}}(p_0)$, is the substantive content, and it is a statement about the geometry of the intervention-parameter space, not about the curvature of the observational manifold or of the complete causal system.

This is the ordinary Fisher information metric of the family $\{p^{(k)}_\eta\}$, well defined under the stated regularity conditions, and it is defined separately for each target $k$, so every member of the family it describes lives on the same space $\mathbb R^{d-1}$. Comparing sensitivity across two different targets $k\neq k'$ requires either restricting attention to the coordinates common to both $X_{-k}$ and $X_{-k'}$ (i.e., excluding both $k$ and $k'$), or moving to soft or stochastic interventions that preserve the full $d$-dimensional support and hence a common space throughout; I flag the latter as the more natural fix but do not develop it here.

\section{Classical Designs, Corrected}
\label{sec:designs}

\subsection{Randomized controlled trials}
Let $(T,Y,Z)$ be treatment, outcome, and covariates, with $T$ and $Z$ continuously distributed with positive smooth density. Define $\kappa_{T,Z}(t,z)=\partial_t\partial_z\log p(t,z)$.

\begin{proposition}
\label{prop:rct}
For continuously distributed $T,Z$ with positive smooth density on a connected, rectangular support: $T\perp Z$ if and only if $\kappa_{T,Z}(t,z)=0$ for all $(t,z)$.
\end{proposition}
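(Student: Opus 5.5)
The plan is to prove both directions directly from the factorization characterization of independence for positive densities, using the connected rectangular support $\mathcal T\times\mathcal Z$ (with $\mathcal T,\mathcal Z$ open intervals) to integrate the vanishing mixed partial. The argument has the same shape as the proof of Proposition~\ref{prop:necessity}: a vanishing derivative in one variable yields a function of the other variable alone, and integrating then produces a product form.

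For the forward direction, assume $T\perp Z$. Then $p(t,z)=p_T(t)\,p_Z(z)$ on the support, with both marginals positive and smooth. Hence $\log p(t,z)=\log p_T(t)+\log p_Z(z)$, and the mixed partial $\partial_t\partial_z$ of a sum of a function of $t$ and a function of $z$ vanishes identically. One small point needs checking: smoothness of the marginals follows from the smoothness of $p$. We have $p_T(t)=p(t,z_0)/p_Z(z_0)$ for any fixed $z_0$, so $p_T$ inherits smoothness in $t$, and similarly for $p_Z$.

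For the converse, assume $\kappa_{T,Z}\equiv 0$ on $\mathcal T\times\mathcal Z$. Fix $t$. The function $z\mapsto\partial_t\log p(t,z)$ has zero derivative on the interval $\mathcal Z$, so it equals some $a(t)$. Fix a reference point $(t_0,z_0)$ in the support. Because the support is a rectangle, the segment from $(t_0,z)$ to $(t,z)$ stays inside it, so integrating in $t$ along it gives
\[
\log p(t,z)=\log p(t_0,z)+A(t),\qquad A(t)=\int_{t_0}^{t}a(u)\,du .
\]
This yields $p(t,z)=f(t)\,h(z)$ with $f=e^{A}>0$ and $h(z)=p(t_0,z)>0$.

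It remains to show that a product-form density forces independence. Integrating over $z$ gives $p_T(t)=f(t)\int h$, and integrating over $t$ gives $p_Z(z)=h(z)\int f$. Since $\int\!\!\int p=1$, we have $\int f\int h=1$, so $p_T(t)p_Z(z)=f(t)h(z)=p(t,z)$. This is exactly $T\perp Z$.

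The main obstacle is the role of the support geometry, and I would treat it carefully rather than as a routine step. On a merely connected, non-rectangular support, the horizontal segment from $(t_0,z)$ to $(t,z)$ may leave the support. The function $a$ could then depend on which $z$-section one is in, and the support itself would not be a product set. In that case even a factorized density would not give independence, since independence requires a product support. The rectangularity hypothesis is exactly what makes each $t$-section and each $z$-section an interval and lets the two integrations go through. I would state explicitly where it is used: in the step ``zero derivative on $\mathcal Z$ implies constant'' and in the path of integration in $t$. Finiteness of $\int f$ and $\int h$ follows from integrability of $p$ together with positivity.
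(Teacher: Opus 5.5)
Your proof is correct and follows the same route as the paper. For the forward direction you factor the density, and for the converse you use the mixed-partials test on the rectangle to get $\log p(t,z)=a(t)+b(z)$ and hence $T\perp Z$. You also make explicit what the paper compresses into ``mixed-partials test'' and ``giving $T\perp Z$'': the integration along horizontal segments (using symmetry of second derivatives to read $\kappa_{T,Z}\equiv 0$ as $\partial_z\partial_t\log p\equiv 0$), exactly where rectangularity is needed, and the normalization step $\int f\int h=1$ that turns product form into $p=p_Tp_Z$.
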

\begin{proof}
If $T\perp Z$, $p(t,z)=p(t)p(z)$, so $\log p(t,z)=a(t)+b(z)$ and the mixed partial vanishes. Conversely, on a connected rectangular domain, $\kappa_{T,Z}\equiv 0$ forces $\log p(t,z)=a(t)+b(z)$ (mixed-partials test), giving $T\perp Z$.
\end{proof}

The quantity $\kappa_{T,Z}$ is not a separate object worth tracking on its own; it is simply the derivative, along the treatment coordinate, of the covariate component of the joint score field
\begin{equation}
\psi_{T,Z}(t,z)
=
\nabla_{(t,z)}\log p(t,z)
=
\begin{pmatrix}
\partial_t\log p(t,z)\\[4pt]
\nabla_z\log p(t,z)
\end{pmatrix}, \qquad \kappa_{T,Z}(t,z) = \partial_t \nabla_z \log p(t,z).
\end{equation}
Under random assignment, $T\perp Z$ gives $p(t,z)=p_T(t)p_Z(z)$, so $\log p(t,z)=\log p_T(t)+\log p_Z(z)$, and the score above separates block-wise into a piece depending only on $t$ and a piece depending only on $z$:
\begin{equation}
\psi_{T,Z}(t,z)
=
\begin{pmatrix}
\partial_t \log p_T(t) \\
\nabla_z \log p_Z(z)
\end{pmatrix}.
\end{equation}
Moving along the treatment coordinate therefore leaves the covariate component of the score untouched, and moving along the covariate coordinates leaves the treatment component untouched, which is exactly $\kappa_{T,Z}\equiv 0$ again, now visible directly in the block structure of $\psi_{T,Z}$ rather than just asserted. Geometrically, the treated and control arms begin from the same covariate score field, the same pretreatment probability landscape; whatever happens to the outcome afterward cannot be traced back to a systematic difference in how $Z$ was distributed at the moment of assignment.

\begin{remark}[Discrete or binary treatment]
Most randomized trials assign a discrete or binary $T$, for which $\partial_t$ is not defined in the ordinary sense and Proposition~\ref{prop:rct} does not directly apply. The corresponding statement is: $T\perp Z$ if and only if $p(z\mid T=t)$ does not depend on $t$, equivalently
\[
\nabla_z\log p(z\mid T=1) \;=\; \nabla_z\log p(z\mid T=0)
\]
in the binary case, and analogously across all levels of $T$ in the general discrete case. This is the natural discrete analogue of $\kappa_{T,Z}\equiv 0$ and is the version relevant to most actual randomized trials: the treated and control arms start out statistically indistinguishable on covariates, up to finite-sample imbalance that chance alone can still produce.
\end{remark}

This block-separation is easy to over-read, so it is worth being blunt about what it does not give us. Random assignment delivers $T\perp Z$; it does not, and should not, deliver $T\perp Y$: the entire point of running the trial is to let $T$ affect $Y$, so $\partial_t\partial_y\log p(t,y,z)$ is generally, and intentionally, nonzero. Nor should this be confused with vanishing Riemannian curvature of the full manifold. What randomization buys geometrically is exactly the block-separation above, applied to the assignment mechanism only, and it says nothing about the treatment-outcome relationship the trial exists to measure; in particular, it does not flatten that relationship.

Randomization and intervention therefore play distinct geometric roles, and it is worth closing this subsection by naming the difference plainly. Randomization flattens the coupling in the assignment mechanism, $\partial_t\nabla_z\log p(t,z)=0$: it certifies that the two arms started from a level field. An actual intervention on $X_k$, by contrast, generates a trajectory of response distributions $\xi\mapsto p_j^{(k,\xi)}(x_j)$, whose local deformation is exactly $S_{k\to j}$ from Definition~\ref{def:causal-detect}. A randomized trial tells you the starting line was fair; the response field is what tells you how far the race actually moved.

\subsection{Conditional independence designs}
Under $Y(t)\perp T\mid Z$ for $t\in\{0,1\}$, standard potential-outcomes arguments \citep{rosenbaum1983,imbens2015} show that $E[Y\mid T=t,Z=z]$ identifies $E[Y(t)\mid Z=z]$. I do not have a geometric restatement of this that is both non-circular and adds content beyond the standard statement, and I do not offer one.

\subsection{Instrumental variables}
\label{sec:iv}
Let $Z$ be a candidate instrument for treatment $T$ on outcome $Y$, with $v_Z$ a direction of variation in $Z$. Since $T$ and $Y$ are random variables rather than deterministic functions of $Z$, the relevant well-defined objects are the conditional mean functions $m_Y(z)=E[Y\mid Z=z]$, $m_T(z)=E[T\mid Z=z]$, which are ordinary functions on the $Z$-manifold with genuine gradients.

\begin{remark}[IV identity]
\label{rem:iv}
Under exclusion and relevance ($g(v_Z,\nabla m_T(z))\neq 0$), the local instrumental-variables estimand along $v_Z$ is
\[
\beta_{\mathrm{IV}}(z) \;=\; \frac{D_{v_Z}m_Y(z)}{D_{v_Z}m_T(z)} \;=\; \frac{g(v_Z,\nabla m_Y(z))}{g(v_Z,\nabla m_T(z))}.
\]
For a binary instrument with monotonicity, the difference-form version of this ratio equals the local average treatment effect (LATE) among compliers \citep{angrist1996}. For a continuous instrument, the derivative version above is a local IV estimand in its own right; it is not automatically a marginal treatment effect. Under the latent-index treatment-selection structure and further assumptions used in that literature \citep{heckman2005}, it may be connected to a weighted or localized marginal treatment effect, but that connection is not immediate from the directional-derivative form alone.
\end{remark}

\subsection{Graphical interventions}
In a DAG-based structural causal model, Pearl's $\operatorname{do}$-operator replaces the structural equation of the intervened node with a constant and deletes its incoming edges. This is exactly the map from $p$ to $p^{(k,\xi)}_{-k}$ of Definition~\ref{def:intscore}: the intervened coordinate is fixed, not merely reweighted, and drops out of the density.

\begin{table}[t]
\centering
\small
\begin{tabular}{@{}p{2.6cm}p{4.6cm}p{6.8cm}@{}}
\toprule
\textbf{Design} & \textbf{Classical statement} & \textbf{Geometric restatement} \\
\midrule
RCT (continuous $T$) & $T\perp Z$ by assignment & $\kappa_{T,Z}\equiv 0$ (Prop.~\ref{prop:rct}); does not extend to $T$ vs.\ $Y$ \\
RCT (discrete $T$) & $T\perp Z$ by assignment & $\nabla_z\log p(z\mid T=t)$ constant across $t$ \\
CIA & $Y(t)\perp T\mid Z$ & standard identification of $E[Y(t)\mid Z]$; no geometric restatement offered \\
IV & Wald ratio $=$ reduced form / first stage & $D_{v_Z}m_Y(z)/D_{v_Z}m_T(z)$; equals LATE only under discrete-instrument and monotonicity conditions \\
Graphical $\operatorname{do}$ & edge deletion, structural equation replaced & $p \mapsto p^{(k,\xi)}_{-k}$, intervened coordinate removed from the density \\
\bottomrule
\end{tabular}
\caption{The geometric dictionary.}
\end{table}

\section{Score-Based Generative Models}
\label{sec:scoregen}

It is tempting to think that a sufficiently good generative model, trained on enough observational data, would eventually "learn" causal structure as a byproduct of learning the density well. Section~\ref{sec:response} rules this out directly, and it is worth restating why in this more concrete setting before moving on. Score-matching and diffusion-based generative models estimate $\nabla_x\log p(x)$ from data by fitting $s_\theta(x)$ and sampling via $dX_t = s_\theta(X_t)\,dt+\sqrt2\,dW_t$ \citep{song2021}, supplying $\hat\psi(x)$, an estimate of the \emph{observational} score. Given Proposition~\ref{prop:projwrong}, projecting $\hat\psi$ onto an admissible set does not yield a causal estimate, no matter how accurate $\hat\psi$ becomes: the family $\{p^{\,\operatorname{do},\varepsilon}_v\}$ that causal content actually lives in is simply not a fact about $p(x)$, so no amount of fitting $p(x)$ more precisely brings it into view. If interventional or quasi-experimental data are available (samples from $p^{\,\operatorname{do},\varepsilon}_v$ for $\varepsilon$ near $0$), one could in principle estimate the interventional response field $S(v)$ of Definition~\ref{def:response} by score-matching each member of the interventional family and differencing. This is a data requirement beyond observational score estimation, not a byproduct of it.

\section{A Terminological Note on "Score"}
\label{sec:metalearners}

The word "score" is overloaded across the literatures this paper touches, and it is worth being explicit about that rather than building a table that only papers over it. In this paper, "score" means $\nabla_x\log p(x)$, the gradient of a log-density. In semiparametric efficiency theory and in double machine learning, "score" (or "estimating equation") means a function $\psi(W;\theta,\eta)$ satisfying a moment condition $E[\psi(W;\theta_0,\eta_0)]=0$, chosen for Neyman orthogonality $\partial_\eta E[\psi(W;\theta_0,\eta)]|_{\eta=\eta_0}=0$ \citep{chernozhukov2018}. In classical statistics, "score" can also mean the likelihood score $\partial_\theta \log p_\theta(x)$ of a parametric family (related to our usage, but indexed by a parameter rather than by the data coordinates themselves). These are three distinct objects, related only by an accident of terminology, and conflating them is an easy way to make a false claim sound like a derivation.

With that said: S-, T-, and X-learners \citep{kunzel2019}, double machine learning \citep{chernozhukov2018}, and the R-learner \citep{nie2021} are all, under their respective identifying assumptions, estimators of the same conditional average treatment effect $E[Y(1)-Y(0)\mid X=x]$, a \emph{contrast} rather than a derivative in the binary-treatment case that motivates most of these methods. (For a continuously parameterized treatment, an analogous derivative object can be defined under additional smoothness assumptions, but the R-learner's usual target is the discrete contrast.) Their consistency results belong to the cited papers; I do not derive anything new about them here, and I would rather leave them uncatalogued than dress them in geometric language that adds nothing beyond what a plain restatement already says.

\section{Discussion: Scope and Limitations}

\begin{itemize}
\item \textbf{The admissible set and the response family are inputs, not outputs.} Every causal claim traces back to $A(x)$ and $\{p^{\,\operatorname{do},\varepsilon}_v\}$, both external to $p(x)$.
\item \textbf{$A(x)$ need not be convex or closed under scaling.} I use cone structure only when interventions are continuously divisible; for discrete or binary interventions $A(x)$ is better read as an unstructured admissible set.
\item \textbf{The causal metric is defined per intervention target.} Comparing sensitivity across two different targets requires either restricting to their common non-intervened coordinates or moving to soft interventions that preserve a common support; I do not develop either extension here.
\item \textbf{Definition~\ref{def:causal-detect} is sufficient, and, on a fixed, connected, strictly positive support, also necessary}, by Proposition~\ref{prop:necessity}. The remaining gap between the score diagnostic and the distributional definition is entirely a support phenomenon: it can only appear when an intervention moves, shrinks, or disconnects the support of $X_j$.
\item \textbf{The worked example is deliberately simple.} Higher-dimensional, non-Gaussian, or nonlinear settings raise identifiability and estimation issues (e.g., LiNGAM-type non-Gaussian identifiability, weak instruments, positivity failures at the boundary of $A(x)$) not addressed here.
\end{itemize}

\section{Conclusion}

Observational geometry cannot identify causal direction, because it is a function of $p(x)$ alone (Proposition~\ref{prop:nonid}). Building the interventional analogue requires working on the coordinates left non-degenerate by a hard intervention (Definition~\ref{def:intscore}), defining causal influence distributionally with a matching marginal-score diagnostic (Definitions~\ref{def:causal-flow} and~\ref{def:causal-detect}), and recognizing that an admissible intervention set cannot by itself convert an observational score into a causal one (Proposition~\ref{prop:projwrong}); the causal content instead lives in an interventional response field that must be supplied structurally (Definition~\ref{def:response}). A causal metric built from Fisher information on a fixed-target intervention family is well posed without the cross-target space mismatch of a naive construction (Definition~\ref{def:causal-metric}). Applied to randomized trials, instrumental variables, and conditional-independence designs, this vocabulary reproduces exactly what each design implies and nothing more: randomization decouples treatment from covariates but says nothing about the treatment-outcome relationship it is designed to reveal, and the instrumental-variables identity is a ratio of gradients of conditional mean functions rather than of the outcome and treatment variables themselves. The framework is a common notation for relating interventions, admissible directions, and score fields across structural econometrics, experimental design, and score-based generative modeling; it does not add identification power beyond what the underlying assumptions already supply. One loose end is now tied off: the score diagnostic is not merely sufficient but also necessary for causal influence once the support of the affected variable holds still (Proposition~\ref{prop:necessity}), so the diagnostic and the definition can only part ways when a support itself moves, shrinks, or splits apart under intervention. A metric that compares sensitivity across different intervention targets, rather than one target at a time, remains open, and would be the natural next thing to build. In the language of Pearl's Ladder of Causation, the paper geometrically separates association from intervention while leaving the counterfactual rung, which requires cross-world, unit-level structural information, as a natural extension.

\bibliographystyle{plainnat}

\end{document}